\newcommand{\blu}{\color{blue}}
\newtheorem{theorem1}{Theorem}
\newtheorem{definition1}[theorem1]{Definition}
\newtheorem{prop}[theorem1]{Proposition}
\begin{document}

\title{Entropic Bounds For Unitary Testers and Mutually Unbiased Unitary Bases}

\author{Jesni Shamsul Shaari}
\address{Faculty of Science, International Islamic University Malaysia (IIUM),
Jalan Sultan Ahmad Shah, Bandar Indera Mahkota, 25200 Kuantan, Pahang, Malaysia}
\address{Institute of Mathematical Research (INSPEM), University Putra Malaysia, 43400 UPM Serdang, Selangor, Malaysia.}
%\author{Rinie N. M. Nasir}
%\affiliation{Faculty of Science, International Islamic University Malaysia (IIUM),
%Jalan Sultan Ahmad Shah, Bandar Indera Mahkota, 25200 Kuantan, Pahang, Malaysia}
\author{Stefano Mancini}
\address{School of Science \& Technology, University of Camerino, I-62032 Camerino, Italy}
\address{ INFN Sezione di Perugia, I-06123 Perugia, Italy}

\date{\today}

\begin{abstract}
We define the entropic bounds, i.e minimal uncertainty for pairs of \textit{unitary testers} in distinguishing between unitary transformations not unlike the well known entropic bounds for observables. We show that in the case of specific sets of testers which pairwise saturate the trivial zero bound, the testers are all equivalent in the sense their statistics are the same. On the other hand, when maximal bounds are saturated by such sets of testers, the unitary operators would form unitary bases which are mutually unbiased. This resembles very much the role of mutually unbiased bases in maximizing the entropic bounds for observables. We show how such a bound can be useful in certain quantum cryptographic protocols.
\end{abstract}

%\pacs{03.65.Aa, 03.67.-a, 02.10.Ud}
%\keywords{}

\maketitle
%\begin{comment}
\section{Introduction}
Measurements of observables in classical physics is aimed at an objective result regarding an inherent property of a system and in principle the various observables can be measured with arbitrary precision. In the case of physics on the other hand, there are bounds on the amount of information one can have when measuring two incompatible observables. Originally denoted as `Heisenberg's Uncertainty Principle' and later made rigorous by Robertson, a  desirable formulation, in the context of entropic bounds was first given by Deutsch \cite{D} and later refined by Maassen and Uffink \cite{maassen}. The entropic bounds for a pair of incompatible measurements, in the context of `preparation uncertainty' reflects the impossibility of predicting both outcomes when making such measurements \cite{step}.
For a review on the subject, we refer to \cite{step,coles}.

While such a bound has received a fair amount of treatment, issues related to determining unknown quantum processes is another matter entirely. Determining an unknown quantum transformation requires not so much observables, but procedures or strategies that would provide us with information regarding the nature of the transformation. It involves the preparation of some state, which is subjected to the unknown transformation and finally measured to determine how it has evolved. Such strategies can be formalized in the context of `process positive-operator-valued measure' (PPOVM) or `quantum testers' \cite{ziman,sedlak}. Quantum testers can be understood as a generalization of positive-operator-valued measures (POVM). While the latter is a set of operators (positive and summing to the identity) used to determine the probability for selected outcomes of measurements in relation to quantum states, the former is a set of operators used to determine the probability for outcomes of process determination.

Moving from uncertainties in measuring observables for quantum states, in the case of pairs of testers, we ask if there would be some bounds on the uncertainty in the results of two differing testing of some transformation. This can be seen almost as the direct consequence of lower bounding entropies in the case of observables to one involving quantum testers.

More precisely, we will frame the problem within the context of guessing games inline with the notion of preparation uncertainty \cite{coles}. In short, preparation uncertainty for observables can be understood as follows: a party, would prepare and distribute quantum states to another who would measure in either one of two bases. Informed of the bases chosen, the party who prepared the state would have to guess the result of the measurement. Incompatibility of measurement bases would imply that the results cannot be predicted with certainty. Hence, analogously for our study, we propose the incompatibility of testers to reflect the inability of predicting the outcomes when testing some transformation.
 We should note that our take on incompatibility here is different from that in \cite{sedlak}; which discusses the issue of `joint testing' i.e. 
 testers are incompatible if the statistics of their outcomes cannot be reproduced when taking the marginals of the probability distribution of the outcomes for a third joint tester. This can be seen as the analogy to the issue of `measurement uncertainty'.
 
In this work, we shall consider testing procedures which use bipartite quantum states (comprising of two $d$-dimensional quantum states that may be entangled) as input to test unitary transformations and measurements thereafter of the transformed states. The measurements made, as commonly in practise, are projective measurements. Unitary transformations are, in principle, how ideal (closed) quantum systems actually evolve and thus represent a fundamental interest. Despite its simplicity, we discuss some interesting implications. In particular we show the possibility of singling out features of the unitary operators tested by looking at pairs of testers which saturates minimal or maximal entropic bound. This has applications in more fields like quantum cryptography \cite{LM}.

%We shall work only with testers of unitary channels; i.e. quantum testers that can be used to distinguish between unitary transformations acting on a Hilbert space in a given set. Unitary transformations are, in principle, how ideal (closed) quantum systems actually evolve. We shall use the term `tester' to specifically refer to setups which uses a quantum state of dimension $d$ (which may be entangled to another system, an ancilla) as an input to test a unitary operator and a measurement thereafter of the transformed state. 
%The measurements made, commonly in practise, are projective measurements. Despite its simplicity, we discuss some interesting implications. In particular we show the possibility of singling out features of the unitary operators tested by looking at pairs of testers which saturates minimal or maximal entropic bound. This has applications in more fields like quantum cryptography \cite{LM}.
%%%%%%%%%%%%%%%%%%%%%%%%%%%%%%%%%%%%%%%%%%%%%%%%%%%%%%%%%%%%%%%%%%%%%%%%%%%%%%%%%%%%%%%%%%%%%%%%%%%%%%%%%% 
\section{Prelude: Mutually Unbiased Basis}

Before starting, it is instructive to briefly surf on a related matter; i.e mutually unbiased basis (MUB) \cite{durt}. Consider two orthonormal bases, $\{|\alpha_0\rangle,...,|\alpha_{d-1}\rangle\}$ and $\{|\beta_0\rangle,...,|\beta_{d-1}\rangle\}$ for $\mathcal{H}_d$. The two bases are called MUB if
\begin{eqnarray}
|\langle\alpha_i|\beta_j\rangle|=1/d, \forall i,j,=0,\ldots, d-1
\end{eqnarray}
thus reflecting the equiprobable transition between states in one basis to another. The notion of MUB plays an important role in establishing strong uncertainty relations \cite{step,coles}. As a matter of fact, when measurements of any quantum system is made in the basis $\mathbb{V}=\{|\mathbb{V}_i\rangle\langle\mathbb{V}_i|\}$ and $\mathbb{W}=\{|\mathbb{W}_i\rangle\langle\mathbb{W}_i|\}$, the well known uncertainty relation 
\begin{eqnarray}\label{ents}
H(\mathbb{V})+H(\mathbb{W})\ge -\log{\max_{i,j}|\langle \mathbb{V}_i|\mathbb{W}_j\rangle}|^2
\end{eqnarray}
holds true, with $H(\cdot)$ the Shannon entropy of the outcomes. The bound is saturated if and only if $\{|\mathbb{V}_0\rangle,...,|\mathbb{V}_{d-1}\rangle\}$ and $\{|\mathbb{W}_0\rangle,...,|\mathbb{W}_{d-1}\rangle\}$ are MUB.

In a complete analogy to MUB for Hilbert spaces, ref.\cite{jssarxiv} studied the notion of mutually unbiased unitary basis (MUUB) for $M(d,\mathbb{C})$.
Two distinct orthogonal basis, $\{P_0,...,P_{D-1}\}$ and $\{Q_0,...,Q_{D-1}\}$ comprising of unitary transformations for some $D$-dimensional subspace of the vector space $M(d,\mathbb{C})$ are sets of MUUB if 
\begin{eqnarray}\label{MUUBdef}
\left|\text{Tr}(P_i^{\dagger}Q_j)\right |^2=\kappa~~,~~\forall i,j,=0,\ldots, D-1
\end{eqnarray}
%for $ i,j,=0,\ldots, D$. \\
for some constant, $\kappa$, which takes on value $1$ and $d$ for a $d^2$ and $d$ dimensional subspace of $M(d,\mathbb{C})$ respectively \cite{jssarxiv,arxiv}. The definition for MUUB was motivated by the fidelity definition of ref.\cite{acin}, which can be written as $|\text{Tr}(u^\dagger u_g)|^2/{d^2}$; which describes how well one unitary operator $u$ compares to a guess $u_g$ for it, given a single use of the operation.

\section{Testers}% Uncertainties}% and Entropic Bounds}

\noindent Consider an unknown process, $\mathcal{E}$, that acts on a $d$-dimensional quantum system. To gain any information regarding the nature of the process, one can submit a $d$-dimensional quantum system, as a probe, (possibly entangled with some ancilla) to be acted upon by that process. A measurement can subsequently be made thereafter on the whole (including ancilla) quantum system. Mathematically, such a setting can be captured in the context of quantum testers. The following quick description of quantum testers is very much derived from ref.\cite{sedlak} and is referred to for a more detailed treatment.

Let the process $\mathcal{E}$ be a completely positive trace-nonincreasing linear map that maps operators on the Hilbert space $\mathcal{H}^{(a)}$ to that of $\mathcal{H}^{(b)}$ (superscripts are used as labels so as not to confuse with subscripts denoting dimensionality). The above setting can rigorously be noted as a pair, $\mathscr{T}=(\rho,\{P_i\})$, where $\rho$ is the probe-ancilla initial density operator on the Hilbert space $\mathcal{H}^{(a)}\otimes\mathcal{H}^{(anc)}$ with $\mathcal{H}^{(anc)}$ as the Hilbert space for the ancilla and $\{P_i\}$ is a POVM on  $\mathcal{H}^{(b)}\otimes\mathcal{H}^{(anc)}$, the output-ancilla space.%\footnote{while we use the same notation here as that of \cite{sedlak}, the reference uses a triple rather than our pair to include a term for the Hilbert space of the ancilla as $\rho$ is already defined on $\mathcal{H}^{(a)}\otimes\mathcal{H}^{(anc)}$ and the ancilla itself can be assumed to evolve trivially \cite{ziman}.}  
The probability, $p_k$, for an outcome $k$ is thus given by 
\begin{eqnarray}\label{sedlaks1}
p_k=\text{Tr}[P_k(\mathcal{E}\otimes \mathcal{I}^{(anc)})(\rho)].
\end{eqnarray}
With $|\Psi\rangle=\sum_{i=0}^{d-1}|i\rangle |i\rangle\in\mathcal{H}^{(a)}\otimes\mathcal{H}^{(a)}$ as an unnormalised maximally entangled state and $E:=(\mathcal{E}\otimes \mathcal{I})(|\Psi\rangle\langle\Psi |)$ the Choi operator for $\mathcal{E}$, eq.(\ref{sedlaks1}) can be written as 
\begin{eqnarray}\label{sedlaks2}
p_k=\text{Tr}[T_kE],
\end{eqnarray}
 with 
\begin{eqnarray}T_k=\text{Tr}_{anc}[(P_k\otimes I^{(a)})(I^{(b)}\otimes S\rho^\text{t} S)].
\end{eqnarray}
Here $S$ is the SWAP operator and $\rho^\text{t}$, the partial transpose of $\rho$ on $\mathcal{H}_a$. A quantum tester (or PPOVM) is then defined as the set $\{T_i\}$ with the conditions being the positivity of $T_i$ as well as $\sum_i T_i=I^{(b)}\otimes [\text{Tr}_{anc}(\rho)]^\text{t}$.%We refer the interested reader to the ref.\cite{sedlak} for more details. 
\footnote{It is worth noting that the quantum tester is referred more specifically as a quantum 2-tester in \cite{bisioaps}.}

However, for our intent and purposes in what follows, we will not make explicit use of the operator structure in the definition above. Rather we will refer to the main ingredients of the quantum testers; namely the input as well as the measurement operators. To avoid confusion in nomenclature, we  will just use the term \textit{tester} as opposed to quantum tester or PPOVM. We will further restrict our work to unitary testers, i.e. testers that are used to test unitary transformations with pure states for input (written as ket states) and projective measurements at the end. Let us thus provide the formal definition of testers.
\begin{definition1}\label{defT}
Consider an unitary operator,  $u\in\mathcal{U}(d)$, to be determined (tested), where $\mathcal{U}(d)$ is the unitary group acting on $\mathcal{H}_d$. A tester, $\mathscr{T}=(|\psi\rangle,\{|\chi_i\rangle\langle\chi_i|\})$, is a setup consisting of a pure state $|\psi\rangle\in\mathcal{H}_{d}\otimes \mathcal{H}_{d}$ as an input to undergo the tested unitary transformation, $u\in\mathcal{U}(d)$, and a set of orthogonal measurement projectors $(\{|\chi_i\rangle\langle\chi_i|\}$ where $\sum|\chi_i\rangle\langle\chi_i|=I_{d}\otimes I_{d})$ to subsequently measure the transformed state.
\end{definition1}
We note that in our definition, the ancilla for the input state is also a $d$-dimensional quantum system and the unitary to be tested would be acting on only on one half of a bipartite input state (not necessarily separable).
%We should note that the definition above, while being consistent with that of \cite{ziman}, is not mathematically identical. The latter is defined in terms of operators; while in our case we retain the pair sufficient for our purpose in what follows, abstracting away the operator structure.
It would make sense to consider testers designed to allow one to distinguish between unitary operators in a certain set, $S=\{u_1\otimes I_d, ..., u_{D}\otimes I_d \}$, where $D=d,d^2$ depending on the number of projection measurement operators summing up to the identity $I_{d}\otimes I_{d}$. The case for $d$ can be understood as having completely separable bipartite states for input (ancilla-free \cite{ziman}) and the measurement operators would only project onto states defining a $d$-dimensional subspace of $\mathcal{H}_{d}\otimes \mathcal{H}_{d}$.  
%%%%%%%%%%%%%%%%%%%%%%%%%%%%%%%%%%%%%%%%%%%%%%%%%%%%%%%%%%%%%%%%%%%%%%%%%%%%%%%%%%%%%%%%%%%%%%%%%%%%%%%%%%%%%%%%%%%%%%%%%%%%%%%%%%%%%%%%%%%%%%%%
%%%%%%%%%%%%%%%%%%%%%%%%%%%%%%%%%%%%%%%%%%%%%%%%%%%%%%%%%%%%%%%%%%%%%%%%%%%%%%%%%%%%%%%%%%%%%%%%%%%%%%%%%%%%%%%%%%%%%%%%%%%%%%%%%%%%%%%%%%%%%%%%

\section{Uncertainties and Entropic Bounds}
It is instructive to define properly what we mean when we wish to identify entropic bounds for a pair of testers. Motivated by the guessing game, inline with the notion of preparation uncertainty (mentioned earlier) reflecting the uncertainty principle for pairs of observables, we define similarly. Let a party, Ailin, prepare a large number of identical unitary transformations, $u$ from some subset acting on the Hilbert space $\mathcal{H}_d$ and submit to another, Boris, for testing. Boris may choose either one of two testers, $\mathscr{T}_1=(|\psi\rangle,\{|\chi_i\rangle\langle\chi_i|\})$ and $\mathscr{T}_2=(|\phi\rangle,\{|\zeta_i\rangle\langle\zeta_i|\})$%\footnote{we may sometimes omit the `pair' in writing testers when it may not be consequential}$} 
and thereafter informs Ailin of the choice of testers.

Let the results of a tester $\mathscr{T}$ be described by a sample space of $\{t_i~|~i=1,..,n\}$ with assigned probabilities $p_{i},...p_{n}$ respectively. The entropic bound for the pair of testers $\mathscr{T}_1$ and $\mathscr{T}_2$ when testing the transformation $u$, is the smallest value for a number $c$, such that 
\begin{eqnarray}
H(\mathscr{T}_1,u)+H(\mathscr{T}_2,u)\ge c~,~ \forall u\in {\cal U}(d)
\end{eqnarray}
with the Shanon entropy $H(\mathscr{T}_j)=-\sum_i{p^{(j)}_i}\log{p^{(j)}_i}$.\\
\newline
The entropic bound informs us of Ailin's uncertainty in guessing Boris's testers' outcome. `Incompatibility' of testers here mean that Ailin cannot predict both Boris's outcome with certainty.

The similarity with the inequality $(\ref{ents})$ is obvious. 
In the following, we shall consider very specific cases of testers; one where we consider \textit{sets of testers} with each one having members with orthonormal states as inputs.
Let us thus define, towards this end, a \textit{complete set of testers} as follows.
\begin{definition1}
 A set $\mathfrak{T}=\{\mathscr{T},\mathscr{T}_2,...,\mathscr{T}_n\}$ with the inputs $|\psi_i\rangle$ for each tester $\mathscr{T}_i$ and common measurements, is complete if $\langle\psi_i|\psi_j\rangle=\delta_{ij}$ and $\sum_i^n|\psi_i\rangle\langle\psi_i|=I_{d\times d}$. 
 \end{definition1}
 
The use of a set of orthonormal states as above in testers is actually quite the standard in quantum process tomography \cite{jones,nel,sev}, and therefore represents a very practical scenario. We shall see how pairs of testers, each taken from a complete set, which saturate the minimal as well as maximal entropic bounds reveal certain interesting features of the unitary operators tested. This would be a reminiscence of the matter of entropic bounds of observables in distinguishing between quantum states.

\subsection{Trivial bound for a pair of testers}
\noindent Let us consider the case for the minimal bound; i.e. being equal to 0, which we shall refer to as the `trivial bound'. As the trivial bound implies that one can precisely predict the outcome of the testers, we can say that such testers are compatible. Trivial entropic bounds together with distinguishability of transformations for a common set of unitaries tested provide an interesting picture of compatibility of testers which can be further expanded to the equivalence notion.

\begin{definition1}
 \noindent Two testers $\mathscr{T}_1=(|\psi\rangle,\{|\chi_i\rangle\langle\chi_i|\})$ and $\mathscr{T}_2=(|\phi\rangle,\{|\zeta_j\rangle\langle\zeta_j|\})$ are defined as equivalent for a unitary $u$ if their probabilities distribution for the outcomes are equal; i.e. $\forall i,|\langle\chi_i|U |\psi_1\rangle|^2=|\langle\zeta_j|U |\phi_1\rangle|^2$ for some $j$ with $U=u\otimes I_d$.
 \end{definition1}
It is easy to show that the relation defined above is indeed an equivalence relation.
In what follows, we demonstrate how, given complete sets of testers like the above, testers derived from differing sets saturating the minimal entropic bounds can be equivalent. Let us first determine the condition of distinguishability of transformation by testers.

\begin{prop}\label{eigen}
Consider a tester $\mathscr{T}=(|\psi\rangle,\{|\chi_i\rangle\langle\chi_i|\})$ and $H(\mathscr{T},u_1)=H(\mathscr{T},u_2)=0$ for some $u_1$ and $u_2$.  Let $U_2^\dagger=(u_2\otimes I_d)^\dagger$ and $U_1=u_1\otimes I_d.$ $U_2^\dagger U_1$  is an eigenoperator for $|\psi\rangle$, if and only if $\mathscr{T}$ cannot distinguish between $U_1$ and $U_2$.

%then $H(\mathscr{T},u_2)=0$ and 
\end{prop}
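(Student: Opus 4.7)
The plan is to translate the zero-entropy hypothesis into a concrete statement about unit vectors and then chase the equivalence. Writing $U_k = u_k \otimes I_d$, the outcome probabilities of $\mathscr{T}$ applied to $u_k$ are $p^{(k)}_i = |\langle \chi_i | U_k | \psi\rangle|^2$, and since $\{|\chi_i\rangle\}$ is an orthonormal basis, the Shannon entropy $H(\mathscr{T},u_k)$ vanishes iff this distribution is a Kronecker delta at some single index $i_k$. Equivalently, there exist phases $\theta_k$ such that $U_k|\psi\rangle = e^{i\theta_k}|\chi_{i_k}\rangle$ for $k=1,2$. In this language, the statement that $\mathscr{T}$ cannot distinguish $U_1$ from $U_2$ (i.e.\ the two delta-distributions coincide) is precisely the condition $i_1 = i_2$.

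For the forward direction I would assume $i_1 = i_2$ and compute directly. Inverting the $k=2$ relation gives $U_2^\dagger|\chi_{i_2}\rangle = e^{-i\theta_2}|\psi\rangle$, hence
\begin{equation}
U_2^\dagger U_1|\psi\rangle = e^{i\theta_1} U_2^\dagger|\chi_{i_1}\rangle = e^{i(\theta_1 - \theta_2)}|\psi\rangle,
\end{equation}
exhibiting $|\psi\rangle$ as an eigenvector of $U_2^\dagger U_1$ with a unimodular eigenvalue. For the converse, suppose $U_2^\dagger U_1|\psi\rangle = \lambda|\psi\rangle$; unitarity of $U_2^\dagger U_1$ forces $|\lambda| = 1$. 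Multiplying by $U_2$ yields $U_1|\psi\rangle = \lambda U_2|\psi\rangle$, i.e.\ $e^{i\theta_1}|\chi_{i_1}\rangle = \lambda e^{i\theta_2}|\chi_{i_2}\rangle$. Orthonormality of $\{|\chi_i\rangle\}$ then forces $i_1 = i_2$, so the probability distributions coincide and the tester cannot distinguish the two unitaries.

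The only genuine subtlety, and the point most deserving of care, is the reading of ``eigenoperator'': I would interpret it as the statement that $|\psi\rangle$ is an eigenvector of $U_2^\dagger U_1$, and then keep careful track of the global phases $e^{i\theta_k}$ so that the two directions really do match up. Once that is pinned down, the argument reduces to a few lines of algebra driven entirely by the fact that vanishing Shannon entropy of a projective measurement pins the post-unitary state to a specific basis vector up to phase.
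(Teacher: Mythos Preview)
Your argument is correct and is essentially identical to the paper's own proof: both translate $H(\mathscr{T},u_k)=0$ into $U_k|\psi\rangle$ being a measurement basis vector up to a global phase, and then verify each implication by a one-line computation (multiplying by $U_2$ or $U_2^\dagger$ as appropriate). The only cosmetic differences are that you handle the two implications in the opposite order and are slightly more explicit about why zero entropy forces a delta distribution and why the eigenvalue is unimodular.
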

\begin{proof}
Let $|\pi_{1}|=|\pi_{21}|=1$ be some global phase factors. $H(\mathscr{T},u_1)=0$ implies that $U_1|\psi\rangle=\pi_{1}|\chi_m\rangle$ for some $m$. If  $U_2^\dagger U_1$ is an eigenoperator for $|\psi\rangle$, then 
\[U_2^\dagger U_1|\psi\rangle=\pi_{21}|\psi\rangle\]
\[U_1|\psi\rangle=U_2\pi_{21} |\psi\rangle=\pi_1|\chi_m\rangle\]
Or
\[U_2|\psi\rangle=\pi_{21}^{-1}\pi_1|\chi_m\rangle\]
i.e. both $U_1$ and $U_2$ maps $|\psi\rangle$   to $|\chi_m\rangle$ (except for some global phase factor) and $\mathscr{T}$ cannot distinguish between them. 

On the other hand, if $\mathscr{T}$ cannot distinguish between $U_1$ and $U_2$, then both $U_1$ and $U_2$ maps $|\psi\rangle$   to $|\chi_m\rangle$ (except for some global phase factor, $|\pi_{1}|=|\pi_{2}|=1$). Hence
\begin{eqnarray}
 U_1|\psi\rangle=\pi_1|\chi_m\rangle ~;~
 U_2|\psi\rangle=\pi_2|\chi_m\rangle \nonumber\\ 
\end{eqnarray} 
and 
\[\pi_2^{-1}|\psi\rangle=U_2^\dagger|\chi_m\rangle\]
thus
\begin{eqnarray}
U_2^\dagger U_1|\psi\rangle=\pi_1\pi_2^{-1}|\psi\rangle ~;~
\end{eqnarray} 
i.e., $U_2^\dagger U_1$ is an eigenoperator for $|\psi\rangle$.
%if $H(\mathscr{T},u_2)=0$ then $u_2|\psi\rangle=a_{u2}|\chi_n\rangle$ for some $n$ and $a_{u2}$ is some global phase factor. Thus
%. Or, $u_2|\psi\rangle=|\chi_m\rangle$(we do not include possible but inconsequential global phase factors here). 
\end{proof}
The contrapositive of this proposition simply tells us that for $H(\mathscr{T},u_1)=H(\mathscr{T},u_2)=0$ as above, $\mathscr{T}$ \textit{can} distinguish between $U_1$ and $U_2$ if and only if $U_2^\dagger U_1$  is an \textit{not} eigenoperator for $|\psi\rangle$.

The distinguishability by $\mathscr{T}$ above also implies $\langle\psi|U_2^\dagger U_1|\psi\rangle=0$. It is worth noting that despite the use of $U_1$ and $U_2$, the issue of distinguishability here is obviously really between $u_1$ and $u_2$.

\begin{prop}\label{trivial}
Consider the complete set of testers $\mathfrak{T}_1$ and $\mathfrak{T}_2$ and let $\mathfrak{U}=\{U_1,...,U_D\}$ where $\forall m,n, U_m^\dagger U_n$ is neither an eigenoperator for any inputs from $\mathfrak{T}_1$ nor $\mathfrak{T}_2$. If $\mathscr{T}_i^{(1)}=(|\psi_i\rangle,\{|\chi_a\rangle\langle\chi_a|\})\in\mathfrak{T}_1$ and $\mathscr{T}_j^{(2)}=(|\phi_j\rangle,\{|\zeta_b\rangle\langle\zeta_b|\})\in\mathfrak{T}_2$ saturate the trivial entropic bound for any unitary operators in $\mathfrak{U}$, then the following statements follow;
\begin{description}
\item[S1] $\mathfrak{U}$ is a basis for some $D$ dimensional subspace of the set of operators $M(d,\mathbb{C})$
\item[S2]  $\mathfrak{U}$, and $\mathscr{T}_i^{(1)}$ and $\mathscr{T}_j^{(2)}$ are equivalent for any unitary in the subspace defined by $span(\mathfrak{U})$
\end{description}\end{prop}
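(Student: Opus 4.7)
The plan proceeds in two stages, both exploiting Proposition~\ref{eigen}. Since saturating the trivial entropic bound for the pair means $H(\mathscr{T}_i^{(1)},u_m) + H(\mathscr{T}_j^{(2)},u_m) = 0$ for every $U_m \in \mathfrak{U}$, and Shannon entropies are nonnegative, each individual entropy must vanish. Consequently $U_m|\psi_i\rangle = \pi_m^{(i)}|\chi_{a(m)}\rangle$ and $U_m|\phi_j\rangle = \mu_m^{(j)}|\zeta_{b(m)}\rangle$ for suitable unit-modulus phases and indices $a(m),b(m)$. By the hypothesis that no $U_m^\dagger U_n$ is an eigenoperator of $|\psi_i\rangle$ or $|\phi_j\rangle$, the contrapositive of Proposition~\ref{eigen} forces both $\mathscr{T}_i^{(1)}$ and $\mathscr{T}_j^{(2)}$ to actually distinguish every pair $U_m, U_n$ with $m\neq n$. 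Hence the indices $a(m)$ are pairwise distinct, and likewise for $b(m)$.

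For \textbf{S1}, linear independence of $\mathfrak{U}$ is then immediate: if $\sum_m c_m U_m = 0$, applying both sides to $|\psi_i\rangle$ gives $\sum_m c_m \pi_m^{(i)} |\chi_{a(m)}\rangle = 0$, and the orthonormality of the common measurement basis forces each $c_m$ to vanish. Thus $\mathfrak{U}$ is a basis for the $D$-dimensional subspace $\mathrm{span}(\mathfrak{U}) \subseteq M(d,\mathbb{C})$ (with $D \le d^2$ guaranteed by the tester's measurement space having that dimension).

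For \textbf{S2}, I would take an arbitrary $U = u \otimes I_d \in \mathrm{span}(\mathfrak{U})$, expand it uniquely as $U = \sum_m c_m U_m$, and compute
\begin{equation}
U|\psi_i\rangle = \sum_m c_m \pi_m^{(i)} |\chi_{a(m)}\rangle, \qquad U|\phi_j\rangle = \sum_m c_m \mu_m^{(j)} |\zeta_{b(m)}\rangle.
\end{equation}
Because the indices $a(m)$ (respectively $b(m)$) are pairwise distinct, the outcome probabilities for $\mathscr{T}_i^{(1)}$ and for $\mathscr{T}_j^{(2)}$ are both precisely the multiset $\{|c_m|^2\}_{m=1}^{D}$, just indexed differently through the relabeling $a(m) \leftrightarrow b(m)$. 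This is exactly the condition required by the equivalence definition, so $\mathscr{T}_i^{(1)}$ and $\mathscr{T}_j^{(2)}$ are equivalent on all of $\mathrm{span}(\mathfrak{U})$.

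The one place requiring care will be verifying that the permutation linking $a(\cdot)$ and $b(\cdot)$ supplies the existential witness $j$ demanded by the equivalence definition for each outcome index. The global phases $\pi_m^{(i)}$ and $\mu_m^{(j)}$ cancel in the modulus-squared, so they play no role beyond bookkeeping; the only real content is the distinctness of $a(m)$ and $b(m)$, which is why the ``no-eigenoperator'' hypothesis in the statement is essential and cannot be dropped.
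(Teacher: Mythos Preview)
Your argument is correct, and for \textbf{S2} it is essentially identical to the paper's: expand an arbitrary element of $\mathrm{span}(\mathfrak{U})$ in the basis, push it through each tester's input, and read off that both testers yield the probability multiset $\{|c_m|^2\}$ up to relabeling.

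For \textbf{S1}, however, you take a more elementary route than the paper. You establish \emph{linear independence} of $\mathfrak{U}$ by applying a hypothetical dependence relation to a single input $|\psi_i\rangle$ and invoking orthonormality of the $|\chi_{a(m)}\rangle$ (the distinctness of the $a(m)$ being the key consequence of the no-eigenoperator hypothesis). The paper instead exploits the \emph{completeness} of the set $\mathfrak{T}_1$: since $\langle\psi_i|U_m^\dagger U_n|\psi_i\rangle=0$ for every $i$, summing over $i$ gives $\mathrm{Tr}(U_m^\dagger U_n)=0$, so the $U_m$ are mutually \emph{orthogonal} in the Hilbert--Schmidt inner product. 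Both arguments suffice for \textbf{S1} as stated (a basis only requires linear independence), but the paper's approach delivers the stronger conclusion that $\mathfrak{U}$ is an \emph{orthogonal} unitary basis---a fact the paper leans on later when discussing MUUBs. Your argument has the advantage of needing only one tester from $\mathfrak{T}_1$, whereas the paper's trace computation genuinely requires the full complete set; the paper itself flags this in a footnote.
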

%\bigskip
\begin{proof}
Consider two unitaries $U_m=u_m\otimes I_d,U_n=u_n\otimes I_d\in \mathfrak{U}$. If 
\begin{eqnarray}
\forall i,j, H(\mathscr{T}_i^{(1)},u_x)+H(\mathscr{T}_j^{(2)},u_x)=0,
\end{eqnarray}
with $x=m,n$, then we obviously have $H(\mathscr{T}_i^{(1)},u_x)=H(\mathscr{T}_i^{(1)},u_x)=0$. Proposition  \ref{eigen} tells us that as $U_m^\dagger U_n$ is neither an eigenoperator for all inputs in $\mathfrak{T}_1$ nor $\mathfrak{T}_2$, thus $U_m$ and $U_n$ can be distinguished by all testers in both $\mathfrak{T}_1$ and $\mathfrak{T}_2$. 
Writing $\{|\psi_i\rangle\}$ as inputs  for testers from $\mathfrak{T}_1$, $\langle\psi_i|U_m^\dagger U_n|\psi_i\rangle=0$ and $\sum_i\langle\psi_i|U_m^\dagger U_n|\psi_i\rangle=Tr[U_m^\dagger U_n]=0$.\footnote{we could have chosen the case for $\mathfrak{T}_2$ with no loss of generality} Thus all elements in $\mathfrak{U}$ are orthogonal to one another and \textbf{S1} follows.\footnote{It is worth noting that \textbf{S1} can be established strictly based on the distinguishability of transformation by a single complete set of testers.}\\
\newline
Any unitary operator defined in the subspace of the basis $\mathfrak{U}$ can be written as $\sum_ka_kU_k$. Such a unitary acting on the input $|\psi_i\rangle$ or $|\phi_j\rangle$ gives $\sum_ka_kU_k|\psi_i\rangle\rightarrow\sum_ka_k|\chi_{(k)}\rangle$ and $\sum_ka_kU_k|\phi_i\rangle\rightarrow\sum_ka_k|\zeta_{(k)}\rangle$ respectively. We have used $|\chi_{(k)}\rangle$ and $|\zeta_{(k)}\rangle$ to denote the action of $U_k$ on $|\psi_i\rangle$ and $|\phi_i\rangle$ respectively. Thus the probability distributions, given by $\{|a_i|^2,i=1,...,d\}$, are the same for both the testers. This gives \textbf{S2}.

\end{proof}

\noindent The uncertainties in both cases when measurements are made are also identical, given by $\sum_k|a_k|^2\log_2({1/|a_k|^2})$. Thus, such sets of testers which saturate the trivial bound can be used to test any unitary in the subspace spanned by the basis $\mathfrak{U}$.

It is worth noting that as the unitaries tested only act on half the bipartite input state, proposition \ref{eigen} is really only relevant to ancilla-free type testers. This is easily seen as the eigenstates for any operator of the form $u\otimes I_d$ with $u\in\mathcal{U}(d)$ is a separable state in $\mathcal{H}_d\otimes\mathcal{H}_d$. Hence, for entangled inputs, proposition \ref{trivial} can be stated simply in terms of the saturation of the trivial bounds alone.

\subsection{Maximal Bounds}
%%%%%%%%%%%%%%%%%%%%%%%%%%%%%%%%%%%%%%%%%%%%%%%%%
%       P  R   O  P  O  S  I  T  I  O  N     6
%
%%%%%%%%%%%%%%%%%%%%%%%%%%%%%%%%%%%%%%%%%%%%%%%%\
\noindent We have seen how trivial bounds provide a picture of compatibility of testers. A maximal bound for a pair of testers on the other hand, gives us the notion of testers which are maximally incompatible. This is essentially the case where predictability of the outcome for one tester when testing a transformation implies complete uncertainty of the other. Let us assume the maximal bound to be $M$. Let the set $\mathfrak{T}_1$ and $\mathfrak{T}_2$ each be a complete sets of testers. Consider a pair of testers, $\mathscr{T}^{(1)}_i\in\mathfrak{T}_1$ and $\mathscr{T}^{(2)}_j\in\mathfrak{T}_2$ for any $i,j$, saturating the maximal entropic bound, $M$, for a common set of unitary transformations. As we can always find a set of unitary transformation, $\{u_1,...,u_D\}$, which minimises one complete set of tester say, $\mathfrak{T}^{(1)}_i$, such that $H(\mathscr{T}^{(1)}_i,u_n)=0$, then $H(\mathscr{T}^{(2)}_j,u_n)=M$. It is obvious now to note that given the number of outcomes for a tester is $D$, the value $M$ is achieved when all outcomes are equally likely, $H(\mathscr{T}_2,u_n)=\log_2{D}$.

Thus, a set $\mathfrak{U}$ which gives zero uncertainty for $\mathfrak{T}_1$ would result in maximal uncertainty for $\mathfrak{T}_2$. On the other hand, another set, say, $\mathfrak{U}'$ resulting in zero uncertainty for $\mathfrak{T}_2$ would give maximal uncertainty for $\mathfrak{T}_1$.
%which is the maximal uncertainty for tester $\mathcal{T}_2$. The argument for the reverse is the same.  In order to have $H(\mathcal{T}_1,u)=0$, we require a unitary such that $|\langle\chi_j|U|\psi_1\rangle|=1$, hence $U|\psi_1\rangle=a_j |\chi_j\rangle$.
 \begin{prop}\label{MUUB}
Consider a pair of testers as above saturating the maximal entropic bound for the sets $\mathfrak{U}$ and $\mathfrak{U}'$. If both $\mathfrak{U}$ and $\mathfrak{U}'$ are bases for a common subspace for $M(d,\mathbb{C})$, these bases would be mutually unbiased to one another.
\end{prop}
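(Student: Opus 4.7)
The plan is to translate the hypotheses into matrix-element statements, use them to derive a pointwise upper bound on $|\text{Tr}(U_m^\dagger V_n)|$, and then force equality via an orthogonality / unitarity sum rule on the basis $\mathfrak{U}$.

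First I would unpack what zero and maximal entropy mean for the inner products. Write $\mathscr{T}^{(1)}_i=(|\psi_i\rangle,\{|\chi_a\rangle\langle\chi_a|\})$. Saturation of the trivial bound by $\mathfrak{U}$ on $\mathfrak{T}_1$ forces $U_m|\psi_i\rangle=e^{i\alpha_{mi}}|\chi_{f(m,i)}\rangle$ for some phase and some index $f(m,i)$; saturation of the maximal bound $M=\log_2 D$ by $\mathfrak{U}'$ on the same complete set $\mathfrak{T}_1$ forces $|\langle\chi_a|V_n|\psi_i\rangle|^2=1/D$ for every $a,i,n$. Taking $D=d^2$ (so that the complete set spans $\mathcal{H}_d\otimes\mathcal{H}_d$), one then has
\begin{equation}
\langle\psi_i|U_m^\dagger V_n|\psi_i\rangle=e^{-i\alpha_{mi}}\langle\chi_{f(m,i)}|V_n|\psi_i\rangle,
\end{equation}
whose modulus is exactly $1/\sqrt{D}=1/d$.

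Second, I would sum over $i$. Because $\{|\psi_i\rangle\}$ is a complete orthonormal basis of $\mathcal{H}_d\otimes\mathcal{H}_d$,
\begin{equation}
\text{Tr}(U_m^\dagger V_n)=\sum_{i=1}^{d^2}\langle\psi_i|U_m^\dagger V_n|\psi_i\rangle,
\end{equation}
and the triangle inequality bounds $|\text{Tr}(U_m^\dagger V_n)|\le d^2\cdot(1/d)=d$, i.e. $|\text{Tr}(U_m^\dagger V_n)|^2\le d^2$ for every $m,n$.

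Third comes the decisive step: showing this bound is saturated for every pair. Here I would invoke Proposition \ref{trivial}, whose statement \textbf{S1} gives that $\mathfrak{U}$ is an orthogonal basis of the common subspace with $\text{Tr}(U_m^\dagger U_{m'})=d^2\delta_{mm'}$. Since $V_n$ lies in the same subspace I can expand $V_n=\sum_m c_{nm}U_m$ with $c_{nm}=\text{Tr}(U_m^\dagger V_n)/d^2$; unitarity of $V_n$ together with the orthogonality of $\mathfrak{U}$ gives the Parseval-type identity
\begin{equation}
\sum_{m=1}^{d^2}|\text{Tr}(U_m^\dagger V_n)|^2=d^4.
\end{equation}
The sum has exactly $d^2$ terms, each bounded above by $d^2$, and totals $d^2\cdot d^2$; hence every term must hit its upper bound. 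This yields $|\text{Tr}(U_m^\dagger V_n)|^2=d^2$, and finally $|\text{Tr}(u_m^\dagger v_n)|^2=1$ after pulling out the factor $d$ from $U_m=u_m\otimes I_d$, which is precisely the MUUB condition (\ref{MUUBdef}) with $\kappa=1$ for a $d^2$-dimensional subspace.

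The main obstacle I expect is the tightness argument: from the triangle inequality alone there is no reason the $d^2$ complex numbers being summed should share a common phase, so one cannot conclude term by term that $|\text{Tr}(U_m^\dagger V_n)|$ attains its maximum. The trick is to bypass phase-alignment entirely and use the global Parseval identity furnished by the orthogonality of $\mathfrak{U}$ and the unitarity of each $V_n$; this converts the inequality into an equality by an averaging (pigeonhole) argument. A small additional care point is to verify that the hypothesis \textbf{S1} of Proposition \ref{trivial} can legitimately be invoked under only the maximal-bound saturation, which is immediate because $\mathfrak{U}$ is a basis by assumption in the current proposition. The ancilla-free variant with $D=d$ would follow the same blueprint with the obvious rescalings giving $\kappa=d$.
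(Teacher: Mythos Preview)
Your argument is correct and follows essentially the same route as the paper: derive the pointwise identity $|\langle\psi_i|U_m^\dagger V_n|\psi_i\rangle|=1/\sqrt{D}$, sum over $i$ and apply the triangle inequality for the upper bound, then force equality via a Parseval/completeness identity plus pigeonhole. The only cosmetic difference is that the paper expresses the Parseval step through the operator--state isomorphism $u\mapsto|u\rangle\rangle$ and sums over the primed basis, whereas you work directly with Hilbert--Schmidt inner products and sum over the unprimed basis; the logic is identical.
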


\begin{proof}
Let us consider a pair of unitary transformations, $U_m=u_m\otimes I_d\in\mathfrak{U}$ and $U'_n=u'_n\otimes I_d\in\mathfrak{U}'$ both operating on $\mathcal{H}_d$. Let tester {\blu }$\mathscr{T}^{(1)}_{i}$ have its input state as $|\psi_i\rangle$ and the measurement operators $\{|\chi_k\rangle\langle\chi_k|\}$. 
We write thus,
\begin{eqnarray}
U_{m}|\psi_i\rangle=|\chi_{m_i}\rangle~,~U'_{n}|\psi_i\rangle=\dfrac{1}{\sqrt{D}}\sum\pi_{k_i}|\chi_{k_i}\rangle
\end{eqnarray}
with $|\pi_{k_i}|=1$ and 
\[
\langle\psi_i |U_{m}^\dagger U'_{n}|\psi_i\rangle=\dfrac{1}{\sqrt{D}}\pi_{m_i}
\]
If we consider the set of testers $\mathfrak{T}_1$, then the inputs are elements of the orthonormal basis $\{|\psi_l\rangle\}$, a similar treatment like the above gives
\[
\sum_l\langle\psi_l |U_{m}^\dagger U'_{n}|\psi_l\rangle=\dfrac{1}{\sqrt{D}}\sum_l\pi_{m_l}
\]
or 
\[
\left|\sum_l\langle\psi_l |U_{m}^\dagger U'_{n}|\psi_l\rangle\right |^2 =|\text{Tr}U_{T_2}^{\dag} U_{T_1}|^2=\dfrac{1}{D}\left |\sum_l\pi_{m_l}\right |^2
\]
Using the triangular inequality (and $\left|\pi_{m_l}\right |=1$ for any $l$),
\[\left |\sum_l\pi_{m_l}\right |^2\leq\left (\sum_l\left |\pi_{m_l}\right |\right)^2=D^2 \] and %$ 0\leq  \Re(\pi_i \pi_j^*)\leq 1$ 
%$ 2\sum_{r\neq t} \Re(\pi^{(r)}_{m_r} {\pi^{(t)}_{m_t}}^*)\leq D^2-D$, 
we have
 \begin{eqnarray}\label{U1U2}
0\leq |\text{Tr}(U_{m}^\dagger U'_{n})|^2\leq D.
\end{eqnarray}
We now consider, separately the case for $D=d^2$ and $D=d$.
%where we denote $U_{T_1}=U_{T_1}^{(0)}$.
%%%%%%%%%%%%%%%%%%%%%%%%%%%%%%%%%%%%%%%%%%%%%%%%%%%%%%%%%%%%%%%%%%%%%%%%%%%%%%%%%%%%%%%%%%%%%%%%%%%%%%%%%%%%%%%%%%%%%%%%%%%%%%%%%%%%%%%%%%%%%%%%%%%%%%%%%%%%%%%%%%%%%%%%%%%%%%%%%%%%%%%%%%%%%%%%%%%%%%

\subsubsection{The case for $D=d^2$}
\noindent We set $D=d^2$ and we have
\begin{eqnarray}\label{u1u2}
|\text{Tr}(U_{m}^\dagger U'_{n})|^2=|\text{Tr}(u_m^{\dag} u_n')\text{Tr}(I_d)|^2\\\nonumber
=d^2|\text{Tr}(u_m^{\dag} u_n')|^2
\end{eqnarray}
Equations (\ref{U1U2}) and (\ref{u1u2}) give
{ \begin{eqnarray}
0\leq |\text{Tr}(u_m^{\dag} u_n')|^2\leq 1
\end{eqnarray}}
We now make use of the well known isomorphism between unitary operators, $u$ on $\mathcal{H}_d$ and (unnormalised) maximally entangled states, $|u\rangle\rangle$ in $\mathcal{H}_d\otimes \mathcal{H}_d$ along with the notation of  refs. \cite{d0,d1,d2},
\begin{eqnarray}
u\equiv \sum_{i}\sum_j \langle j|u|i\rangle |i\rangle|j\rangle=|u\rangle\rangle\in\mathcal{H}_d\otimes \mathcal{H}_d
\end{eqnarray} 
for some basis vectors $|i\rangle$,$|j\rangle$ of $\mathcal{H}_d$. With $|\langle\langle u_a|u_b\rangle\rangle|^2=\left|\text{Tr}(u_a^{\dagger}u_b)\right |^2$ and let $|\tilde u_a\rangle\rangle=|u_a\rangle\rangle/\sqrt{d}$ be a normalised maximally entangled state;
we now show that $|\text{Tr}(u_m^{\dag} u_n')|^2$ cannot be lesser than $1$. \\
\newline
%\textbf{Case 1}\\
If $|\text{Tr}(u_m^{\dag} u_n')|^2=|\langle\langle u_m|u_n'\rangle\rangle|^2\le 1$, then
\begin{eqnarray}
|\langle\langle \tilde u_m|\tilde u_n'\rangle\rangle|^2\le1/d^2%\\\nonumber
\Rightarrow\sum_i|\langle\langle \tilde u_m|{\tilde u_i'}\rangle\rangle|^2\le 1
\end{eqnarray}
However, $\sum_i|\langle\langle  \tilde u_m|{\tilde u_i'}\rangle\rangle|^2=1$ (as $\sum |\tilde u_i'\rangle\rangle\langle\langle \tilde u_i'|=I_{d^2}$), we can conclude that
the two unitary bases of $M(d,\mathbb{C})$, $\{u_1,...,u_{d^2}\}$ and $\{u_1',...,u_{d^2}'\}$ are such that, 
\[
|\text{Tr}(u_m^{\dag} u_n')|^2=1
\]
fulfilling the definition of MUUBs of eq.(\ref{MUUBdef}).
%\newline
\subsubsection{The case for $D=d$}
The essential features for this part is the same as the previous; though there are few matters worth mentioning. The first is that, given the input state is $d$ dimensional and the projective measurements project onto $\mathcal{H}_d$, the testers can only discern between $d$ unitary operations. The other is that, we shall consider only the case where the sets of unitary operators (bases) tested form a common subspace. The requirement of bases of a common subspace was implicit in the previous as \textit{any} unitary operator belongs to some orthonormal basis of the same vector space $M(d,\mathbb{C})$, a scenario where the bases should span a common subspace.

Starting from equation (\ref{U1U2}), we consider the case for the unitary operators $v_m\in\mathcal{V}$ and $v_n'\in\mathcal{V}'$ 
\[
0\leq |\text{Tr}(v_m^{\dag} v_n')|^2\leq d
\]
it is not difficult to show that $|\text{Tr}(v_m^{\dag} v_n')|^2$ cannot be less that $d$. Using the same isomorphism as in the previous section, if $|\text{Tr}(v_m^{\dag} v_n')|^2\le d$ then 
\begin{eqnarray}
|\langle\langle \tilde v_m|\tilde {v_n'}\rangle\rangle|^2\le 1/d%\\\nonumber
\Rightarrow\sum_i|\langle\langle \tilde v_m|\tilde {v_i'}\rangle\rangle|^2\le 1~.
\end{eqnarray}
However,  $\sum_i|\langle\langle \tilde v_m|{\tilde v_i}'\rangle\rangle|^2=1$ thus $|\text{Tr}(v_m^{\dag} v_n')|^2=d$. It is instructive to note that $\{|\tilde v_1\rangle,...,|\tilde v_d\rangle\}$ and $\{|\tilde v_1'\rangle,...,|\tilde v_d'\rangle\}$ are orthonormal bases (of MES) which spans a common $d$ dimensional subspace of $\mathcal{H}_d\otimes\mathcal{H}_d$. Referring to \cite{arxiv}, these two bases are therefore mutually unbiased to one another. 
\end{proof}

\section{Simple Examples}

\noindent Ideally expressing the entropic bounds in terms of either the observables or input states (or both) independently of the unitary tested is ultimately a challenge. 
In this section we consider some simple cases.

Consider the case where the inputs for $\mathscr{T}_1$ and $\mathscr{T}_2$ are identical; $\mathscr{T}_1=(|\psi\rangle,\{|\chi_i\rangle\langle\chi_i|\})$ and $\mathscr{T}_2=(|\psi\rangle,\{|\zeta_i\rangle\langle\zeta_i|\})$. Then, 
\begin{eqnarray}
H(\mathscr{T}_1,u)+H(\mathscr{T}_2,u)=\sum_m |\langle\chi_m|\psi_u\rangle|^2\log{|\langle\chi_m |\psi_u\rangle|^2}\nonumber\\
+\sum_m |\langle\zeta_m |\psi_u\rangle|^2\log{|\langle\zeta_m |\psi_u\rangle|^2}\nonumber\\
=H(\{|\chi_i\rangle\langle\chi_i|\},|\psi_u\rangle )+H(\{|\zeta_i\rangle\langle\zeta_i|\},|\psi_u\rangle)
\end{eqnarray}
where $|\psi_u\rangle=u\otimes I_d|\psi\rangle$. Thus we see that for a given $u$, the entropic bound reduces to the entropic bound for observables in estimating a state $|\psi_u\rangle$. Hence for any $u$, we have
\begin{eqnarray}\label{maxeb}
H(\mathscr{T}_1,u)+H(\mathscr{T}_2,u)\ge-\log_2{\max\limits_{ij}|\langle\chi_i|\zeta_j\rangle|^2}
\end{eqnarray}
which reduces to the entropic bounds for the observables. In such a scenario, the maximal value for the entropic bound would be the case where eigenstates of the observables are from two MUBs.

An immediate example would be the testers with qubit inputs $\mathscr{T}_{0Z}=(|0\rangle,\mathbb{Z})$ and $\mathscr{T}_{0X}=(|0\rangle,\mathbb{X})$, where $\mathbb{X}$ and $\mathbb{Z}$ are the measurement bases corresponding to the Pauli observables. This pair actually saturate the entropic bound when testing the unitaries $I_2$ and $H=(I_2-i\sigma_Y)/\sqrt{2}$ (essentially the Hadamard operator where $\sigma_Y$ is the Pauli operator), which are from differing MUUBs. We shall see in the next section, how this can be used (see also \cite{js})  in constructing a bidirectional QKD protocol.

While the reduction to entropic bound for observables is true for identical inputs, cases for nonidentical inputs can be very different.
Consider the case $\mathscr{T}_{0Z}=(|0\rangle,\mathbb{Z})$ and $\mathscr{T}_{+X}=(|x_+\rangle,\mathbb{X})$. Despite the different observables, these testers saturate the trivial bound when distinguishing between the unitaries $I_2$ and $\sigma_Y$. Alternatively, consider the case $\mathscr{T}_{0Z}=(|0\rangle,\mathbb{Z})$ and $\mathscr{T}_{+Z}=(|x_+\rangle,\mathbb{Z})$. Despite identical observables, this pair never saturate the trivial bound.

\section{Application to QKD}

Prepare and measure QKD schemes like that of BB84 make use of measurement of observables in decoding. Thus, in a nutshell, the uncertainty principle guarantees the security of the shared secret between the legitimate parties. However, bidirectional QKD schemes (also referred to as two-way QKD) sees encoding as a unitary operation and thus the use of testers become of immediate interest. While earlier studies suggests that the security of such protocols to be based on the use of nonorthogonal states, later versions suggests it should be based on the inability to distinguish between the unitaries used \cite{chiri2, jspla,js}.

Let us review very quickly the bidirectional protocols and understand it in the context of testers. Referring to standard cryptographic communicating parties, Alice and Bob, the protocol begins with Bob sending to Alice a qubit selected from $2$ MUBs. Alice would then select a unitary transformation, either the identity operator or one that would flip the qubit to an orthogonal state before returning to Bob. Bob who measures the returned qubit in the same basis he prepared in would be able to infer Alice's operation by observing the evolution of his qubit. We see here the obvious fact that Bob actually, in every run of the protocol, is using a tester randomly picked from either one of two complete tester sets; $\mathfrak{T}_z=\{\mathscr{T}_{0Z}=(|0\rangle,\mathbb{Z}),\mathscr{T}_{1Z}=(|1\rangle,\mathbb{Z})\}$ or $\mathfrak{T}_x=\{\mathscr{T}_{+X}=(|x_+\rangle,\mathbb{X}),\mathscr{T}_{-X}=(|x_-\rangle,\mathbb{X})\}$. These testers saturate the trivial bound for the unitary operators used in the protocol, i.e. 
\begin{eqnarray}
H(\mathscr{T}_z,u)=H(\mathscr{T}_x,u)=0~,~\mathscr{T}_z\in \mathfrak{T}_z, \mathscr{T}_x\in \mathfrak{T}_x
\end{eqnarray}
with $u=I_2,i\sigma_y$. They are all equivalent to one another for the subspace of unitary operators spanned by the basis $\{I_2,i\sigma_y\}$. While this ensures Bob's decoding is perfect; it also means that Eve could use the same tester as Bob (or one equivalent to it) in his place before she replicates the transformation for Bob's tester. In the literature, this would be known as the Quantum Man in the Middle (QMM) attack where Eve hijacks Bob's qubit, sends her own to Alice and determine her encoding perfectly before acting on Bob's qubit with her new gained knowledge. Thus, a control mode (CM) where Alice randomly chooses to measure the received qubit in a basis ($\mathbb{X}$ or $\mathbb{Z}$) instead and comparing the results with Bob to alert them of Eve's presence becomes necessary. Unfortunately, this has effectively led to an execution of a `prepare-and-measure' protocol on the side-lines.

In designing such a bidirectional protocol, it is instructive to consider the use of the entropic bound. Motivated by the role of the uncertainty principle (entropic bounds for observables) in prepare and measure schemes \cite{pp,koashi}, we propose that Bob should use testers which saturates the maximal entropic bound of pairs of unitary testers of equation (\ref{maxeb}). For the sake of clarity, we shall begin with qubit based protocols. Using only $\mathscr{T}_{0Z}=(|0\rangle,\mathbb{Z})$ and $\mathscr{T}_{0X}(|0\rangle,\mathbb{X})$ to distinguish between $I$ and $H$, we come to a scenario where the state in the forward path is known to all, and privacy lies only in the inability to distinguish between two possible states from two differing MUBs in the backward path. This is effectively a B92 like protocol. Even if Bob uses the complete set of testers, $\mathfrak{T}_z=\{\mathscr{T}_{0Z}=(|0\rangle,\mathbb{Z}),\mathscr{T}_{1Z}=(|1\rangle,\mathbb{Z})\}$ and another, say, $\{\mathscr{T}_{0X}=(|0\rangle,\mathbb{X}),\mathscr{T}_{1X}=(|1\rangle,\mathbb{X})\}$, there is no real difference to the protocol's security as the states in the forward path is completely distinguishable. It is worth noting that as the testers of $\mathfrak{T}_x$ and $\mathfrak{T}_z$ are equivalent for the unitary operators in $span(\{I_2,i\sigma_y\})$, the bound 
 \begin{eqnarray}
H[\mathscr{T}_{nZ}=(|n\rangle,\mathbb{Z})]+H[\mathscr{T}_{mX}=(|m\rangle,\mathbb{X})]%\nonumber\\=
%H(T(|p\rangle,\mathbb{X}))+H(T(|q\rangle,\mathbb{Z}))
\ge 1
\end{eqnarray}
where $m,n$ are either both from $\{|0\rangle,|1\rangle\}$ or both from $\{|x_+\rangle,|x_-\rangle\}$ holds when testing the unitaries $I_2$ and $H$. A proper modification can be to add more testers. Namely the sets  $\mathfrak{T}_x$ and $\{\mathscr{T}_{+Z}=(|x_+\rangle,\mathbb{Z}),\mathscr{T}_{-Z}=(|x_-\rangle,\mathbb{Z})\}$. The additional testers with inputs coming from differing MUBs would force some uncertainty upon Eve when attacking in the forward path. This is essentially the protocol discussed in ref.\cite{js}\footnote{the details of the protocol and Bob's decoding procedure is described in the ref.\cite{js}}.
An alternative would be to also increase the number of transformations that Alice can use. We shall describe this without limiting ourselves to the qubit scenario.

Let us consider Bob using the set of testers $\mathfrak{T}_B$ and $\mathfrak{T}_b$ which saturates the maximal entropic bound.  Thus, Alice may use a set of unitary encoding, $\mathfrak{U}_B=\{u_0^{(B)},...,u_{\mathfrak{D-1}}^{(B)}\}$ for which its elements can be distinguished perfectly by $\mathfrak{T}_B$ but maximises the uncertainty of $\mathfrak{T}_b$. Another set, $\mathfrak{U}_b=\{u_0^{(b)},...,u_{\mathfrak{D-1}}^{(b)}\}$ would on the other hand maximise $\mathfrak{T}_B$ but can be distinguished by $\mathfrak{T}_b$. The unitary operators $u_i^{(b)},u_i^{(b)}$ encodes the value $i$ for a $D$-ary key. To ensure that Eve would not be able to distinguish between the states used as the input of the tester (forward path), the tester sets should include equivalent testers with nonorthogonal states for input in each respective sets.

The protocol thus goes as follows: Bob selects a tester at random from either $\mathfrak{T}_B$ or $\mathfrak{T}_b$ and sends the tester's input to Alice. Alice would select an element from either set $\mathfrak{U}_B$ or $\mathfrak{U}_b$ to encode a $D$-ary digit. This is repeated for a large number of times and at the end of the protocol, Alice would declare publicly which sets were used (but not the specific unitary used). Bob would then discard the cases where his tester's uncertainty would have been maximised. Note that the protocol of \cite{jspla} is the case for $D=2$. %For simplicity, let us fix $D=d^2$, which provides for the maximal capacity for such a protocol. Bob then needs to use entangled states from the $\mathcal{H}_d\otimes\mathcal{H}_d$ Hilbert space for the input of his testers. 
Proposition \ref{MUUB} ensures the protocol makes use of MUUBs, $\mathfrak{U}_B$ and $\mathfrak{U}_b$ and $|\text{Tr}(u_i^{(b)\dag} u_j^{(B)})|^2$ takes on the value $1$ and $d$ for $D$ being $d^2$ and $d$ respectively. Given the fidelity $|\text{Tr}(u_i^{(b)\dag} u_j^{(B)})|^2/d^2$ of ref.\cite{acin}, this implies that when a unitary operator is chosen from one set, a guess of it being any operator coming from the other is equiprobable. 
Hence an adversary who would want to use an equivalent tester to Bob's to determine the encoding would only be able to maximise her information gain in half the time; with the other half experiencing maximal uncertainty. Obviously, a more involved analysis is required to properly address the most generic eavesdropping strategy in the framework of testers, thus a proper estimate of the security of such a protocol. While this is beyond the scope of this work, we conjecture it to be promising based on earlier works on bidirectional QKD using qubits making use of MUUBs \cite{jspla,js}.

%More precisely, if we set Eve's tester, $T_E$, together with any of Bob's testers to be entropic bounded by some nontrivial value $x$; 
%\begin{eqnarray}
%H(T_E)+H(\mathcal{T}_B)\ge D\nonumber\\
%\Rightarrow 1- H(\mathcal{T}_B)-D\ge I_E
%\end{eqnarray}
%where $H(\mathcal{T}_B)$ ($H(\mathcal{T}_b)$) is the uncertainty for any of Bob's testers in the set $\mathcal{T}_B$, $\mathcal{T}_b$.
%then 

\section{Conclusion}

In enlightening our understanding of nature, quantum mechanics has also  prescribed limitations on our ability to make precise measurements in distinguishing between quantum states. When it comes to distinguishing between unitary operators using testers, understandably, given that quantum states themselves are used as `test states' or probes, such limitations are carried over for pairs of testers.

In this work, we propose a quantitative formulation for the limits of knowledge one can have when testing unitary operators in terms of entropic bounds for a pair of testers used. We see how, when using a specific set of testers, namely complete set of testers, trivial and maximal entropic bounds reflect certain special properties of the unitary tested. Coupled with the issue of distinguishability of proposition \ref{eigen}, the trivial bound implies the operators tested form an orthogonal unitary operator basis. Maximal bounds on the other hand imply that the pair of unitary operators, for which one maximises one tester's uncertainty while the other minimises it (and vice versa), comes from two MUUBs. This is a reminiscence of the role of MUB in maximising the entropic bounds of observables. %It is perhaps reasonable to reflect on the nature of equivalence between quantum states and unitary operators beyond that defined by the well known isomorphism between MES and unitary operators. 
It is also interesting to note that the `similarities' extend even to the issue of application. The uncertainty principle has essentially led to the brith of quantum cryptography. Here we see how the uncertainty between testers play a similar role in the construction of quantum cryptographic schemes, specifically that of bidirectional QKD schemes. 

It is obviously interesting to have a more comprehensive understanding of the matter to include entropic bounds for more generalised testers with mixed states for input or POVMs for measurements. Or one may even imagine such bounds to exist between testers which test channels which are not necessarily unitary. We hope to address these in our future studies. 

\section{Acknowledgement}

J. S. S. would like to acknowledge financial support under the project FRGS19-141-0750 from the Ministry of Higher Education's Fundamental Research Grant Scheme and the University's Research Management Centre (RMC) for their support and facilities provided. J. S. S would also like to extend a special thanks to Aida (RMC) and her team for their kind assistance and encouragement. 

S. M acknowledges the funding from the European Union's Horizon 2020 research and innovation programme under grant agreement No 862644 (FET-Open project ``QUARTET").

%\section*{References}

\end{document}